\theoremstyle{plain}
\newtheorem{thm}{Theorem}
\newtheorem{lem}[thm]{Lemma}
\newtheorem{prop}[thm]{Proposition}
\theoremstyle{definition}
\newtheorem{definition}[thm]{Definition}
\newtheorem{exl}[thm]{Example}
\numberwithin{thm}{section}
\newcommand{\adj}{\leftrightarrow}
\newcommand{\adjeq}{\leftrightarroweq}
\def\Z{{\mathbb Z}}
\def\N{{\mathbb N}}
\def\R{{\mathbb R}}
\begin{document}
\title{Beyond the Hausdorff Metric in Digital Topology}
\author{Laurence Boxer
\thanks{
    Department of Computer and Information Sciences,
    Niagara University,
    Niagara University, NY 14109, USA;
    and Department of Computer Science and Engineering,
    State University of New York at Buffalo.
    email: boxer@niagara.edu
}
}

\date{ }
\maketitle{}

\begin{abstract}
Two objects may be close in the Hausdorff metric, yet have very different 
geometric and topological properties. We examine other methods of
comparing digital images such that objects close in each of these measures 
have some similar geometric or topological property. Such measures may be combined
with the Hausdorff metric to yield a metric in which close images are similar with
respect to multiple properties.

Key words and phrases: digital topology, digital image, Hausdorff metric

MSC: 54B20
\end{abstract}


\section{Introduction}
A key question in digital image processing is whether two digital 
images $A$ and $B$ represent the same object. If, after magnification or shrinking and 
translation, copies $A'$ and $B'$ of the respective images
have been scaled to approximately the same size and are located in approximately the same
position, a Hausdorff metric $H$ may be employed: if $H(A',B')$ is small, then perhaps
$A$ and $B$ represent the same object; if $H(A',B')$ is large, then 
$A$ and $B$ probably do not represent the same object. However, the Hausdorff metric
is very crude as a measure of similarity. In this paper, we consider other
comparisons of digital images.

\section{Preliminaries}
Much of this section is quoted or paraphrased from~\cite{BxSt16}.

We use $\N$ to indicate the set of natural numbers, $\Z$ for the set of integers, and
$\R$ for the set of real numbers.

\subsection{Adjacencies}
A digital image is a graph $(X,\kappa)$, where $X$ is a subset of $\Z^n$ for
some positive integer~$n$, and $\kappa$ is an adjacency relation for the points
of~$X$. The $c_u$-adjacencies are commonly used.
Let $x,y \in \Z^n$, $x \neq y$, where we consider these points as $n$-tuples of integers:
\[ x=(x_1,\ldots, x_n),~~~y=(y_1,\ldots,y_n).
\]
Let $u \in \Z$,
$1 \leq u \leq n$. We say $x$ and $y$ are 
{\em $c_u$-adjacent} if
\begin{itemize}
\item There are at most $u$ indices $i$ for which 
      $|x_i - y_i| = 1$.
\item For all indices $j$ such that $|x_j - y_j| \neq 1$ we
      have $x_j=y_j$.
\end{itemize}
Often, a $c_u$-adjacency is denoted by the number of points
adjacent to a given point in $\Z^n$ using this adjacency.
E.g.,
\begin{itemize}
\item In $\Z^1$, $c_1$-adjacency is 2-adjacency.
\item In $\Z^2$, $c_1$-adjacency is 4-adjacency and
      $c_2$-adjacency is 8-adjacency.
\item In $\Z^3$, $c_1$-adjacency is 6-adjacency,
      $c_2$-adjacency is 18-adjacency, and $c_3$-adjacency
      is 26-adjacency.
\end{itemize}

We write $x \adj_{\kappa} x'$, or $x \adj x'$ when $\kappa$ is understood, to indicate
that $x$ and $x'$ are $\kappa$-adjacent. Similarly, we
write $x \adjeq_{\kappa} x'$, or $x \adjeq x'$ when $\kappa$ is understood, to indicate
that $x$ and $x'$ are $\kappa$-adjacent or equal.

A subset $Y$ of a digital image $(X,\kappa)$ is
{\em $\kappa$-connected}~\cite{Rosenfeld},
or {\em connected} when $\kappa$
is understood, if for every pair of points $a,b \in Y$ there
exists a sequence $\{y_i\}_{i=0}^m \subset Y$ such that
$a=y_0$, $b=y_m$, and $y_i \adj_{\kappa} y_{i+1}$ for $0 \leq i < m$.

\subsection{Digitally continuous functions}
The following generalizes a definition of~\cite{Rosenfeld}.

\begin{definition}
\label{continuous}
{\rm ~\cite{Boxer99}}
Let $(X,\kappa)$ and $(Y,\lambda)$ be digital images. A single-valued function
$f: X \rightarrow Y$ is $(\kappa,\lambda)$-continuous if for
every $\kappa$-connected $A \subset X$ we have that
$f(A)$ is a $\lambda$-connected subset of $Y$. $\Box$
\end{definition}

When the adjacency relations are understood, we will simply say that $f$ is \emph{continuous}. Continuity can be expressed in terms of adjacency of points:
\begin{thm}
{\rm ~\cite{Rosenfeld,Boxer99}}
A function $f:X\to Y$ is continuous if and only if $x \adj x'$ in $X$ implies 
$f(x) \adjeq f(x')$. \qed
\end{thm}

See also~\cite{Chen94,Chen04}, where similar notions are referred to as {\em immersions}, {\em gradually varied operators},
and {\em gradually varied mappings}.

\subsection{Pseudometrics and metrics}
\begin{definition}
{\rm \cite{Dugundji}}
Let $X$ be a nonempty set. Let $d: X^2 \to [0,\infty)$ be a function such that
for all $x,y,z \in X$,
\begin{itemize}
    \item $d(x,y) \ge 0$;
    \item $d(x,x) = 0$;
    \item $d(x,y) = d(y,x)$; and
    \item $d(x,z) \le d(x,y) + d(y,z)$.
\end{itemize}
Then $d$ is a {\em pseudometric} for $X$. If, further, $d(x,y) = 0$ implies $x=y$ then
$d$ is a metric for $X$.
\end{definition}

Pseudometrics that can be applied to pairs $(A,B)$ of
nonempty subsets of a digital image $X$ include
the absolute values of the differences in their
\begin{itemize}
    \item deviations from convexity. Several such deviations are discussed 
          in~\cite{Stern,Boxer93}, for each of which it was shown that two objects can be
          ``close" in the Hausdorff metric yet quite different with respect to the
          deviation from convexity. These can be adapted to digital images with respect
          to digital convexity as defined in~\cite{BxConvex}.
     \item Euler characteristics. I.e., the function 
          \[ s_{\chi}(A,B) = |\chi(A) - \chi(B)|,
          \]
          where $\chi(X)$ is the Euler characteristic of $(X,\kappa)$, 
          is a pseudometric for digital images in $\Z^n$.
          An improper definition of the Euler characteristic
          for digital images was given in~\cite{Han07}. An appropriate definition is
          given in~\cite{BKO11}. 
    \item Lusternik-Schnirelman category $cat_{\kappa}(X)$~\cite{BV}. I.e., the function
          \[ s_{LS,\kappa}(A,B) = |cat_{\kappa}(A) - cat_{\kappa}(B)|,
          \]
          where $cat_{\kappa}(X)$ is the Lusternik-Schnirelman category of $(X,\kappa)$,
          is a pseudometric for digital images in $\Z^n$.
    \item diameters. This is discussed below.
\end{itemize}

The following is easily verified and extends an assertion of~\cite{Boxer93}.

\begin{lem}
\label{addPseudos}
Let $\Delta_i: X^2 \to [0,\infty)$ be a pseudometric,
$1 \le i \le n$. Then
$D = \sum_{i=1}^n \Delta_i: X^2 \to [0,\infty)$ is a 
pseudometric. Further, if at least one of the $\Delta_i$ is 
a metric, then $D$ is a metric.
\end{lem}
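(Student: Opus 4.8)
The plan is to verify directly that $D = \sum_{i=1}^n \Delta_i$ satisfies each of the four pseudometric axioms by reducing them to the corresponding properties of the summands $\Delta_i$. Since each $\Delta_i$ maps into $[0,\infty)$, a finite sum of nonnegative values is again nonnegative, so $D(x,y) \ge 0$ and the codomain $[0,\infty)$ is respected. For the second axiom, $D(x,x) = \sum_{i=1}^n \Delta_i(x,x) = \sum_{i=1}^n 0 = 0$. Symmetry follows termwise: $D(x,y) = \sum_{i=1}^n \Delta_i(x,y) = \sum_{i=1}^n \Delta_i(y,x) = D(y,x)$.

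The only axiom requiring a small argument is the triangle inequality. Here I would fix $x,y,z \in X$ and apply the triangle inequality for each $\Delta_i$ separately, obtaining $\Delta_i(x,z) \le \Delta_i(x,y) + \Delta_i(y,z)$ for every $i$. Summing these $n$ inequalities over $i$ and regrouping the right-hand side gives
\[
D(x,z) = \sum_{i=1}^n \Delta_i(x,z) \le \sum_{i=1}^n \bigl(\Delta_i(x,y) + \Delta_i(y,z)\bigr) = D(x,y) + D(y,z),
\]
which is exactly the triangle inequality for $D$. This establishes that $D$ is a pseudometric.

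For the final claim, suppose without loss of generality that $\Delta_1$ is in fact a metric. To show $D$ is a metric it remains only to prove that $D(x,y) = 0$ implies $x = y$. Assuming $D(x,y) = 0$, the nonnegativity of each summand forces $\Delta_i(x,y) = 0$ for all $i$ (a finite sum of nonnegative reals vanishes only if every term vanishes); in particular $\Delta_1(x,y) = 0$. Since $\Delta_1$ is a metric, this yields $x = y$, completing the argument.

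I do not expect any genuine obstacle here: the result is precisely of the form where properties of a finite sum inherit from properties of the terms, and the only step with any content is the termwise application of the triangle inequality followed by summation. The main point to state carefully is the elementary fact that a finite sum of nonnegative real numbers equals zero if and only if each summand is zero, which is what lets the metric property of a single $\Delta_i$ propagate to $D$.
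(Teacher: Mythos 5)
Your proof is correct and complete: the paper itself offers no written proof (it states the lemma as ``easily verified''), and your termwise verification of the four axioms, together with the observation that a finite sum of nonnegative reals vanishes only if every term does, is exactly the standard argument the paper has in mind. Nothing is missing.
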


Here we mention metrics we use in this paper for $\R^n$ or $\Z^n$. Let
$x = (x_1, \ldots, x_n)$, $y=(y_1, \ldots, y_n)$.
\begin{itemize}
    \item Let $p \ge 1$. The $\ell_p$ {\em metric} for $\R^n$ is given by
    \[ d_p(x,y) = \left( \sum_{i=1}^n |x_i - y_i|^p \right )^{1/p}.
    \]
    The special case $p=1$ gives the {\em Manhattan} or {\em city block} metric $d_1: (\R^n)^2 \to [0,\infty)$, given by
          \[ d_1(x,y) = \sum_{i=1}^n |x_i - y_i|.
          \]
    The special case $p=2$ gives the {\em Euclidean metric} $d_2: (\R^n)^2 \to [0,\infty)$, given by
          \[ d_2(x,y) = \left( \sum_{i=1}^n |x_i - y_i|^2 \right )^{1/2}.
          \]
    \item The {\em shortest path metric}~{\rm \cite{Han05}}:
          Let $(X,\kappa)$ be a connected digital image. For $x,y \in X$, let
\[ d_{\kappa}(x,y) = \min\{n \, | \mbox{ there is a $\kappa$-path of length $n$ in $X$ from $x$ to $y$}.\}
\]
    \item The {\em Hausdorff metric based on a metric} $d$~\cite{Nadler}:
          Let $d: X^2 \to [0,\infty)$ be a metric where 
          $X \subset \R^n$. The Hausdorff metric for
          nonempty bounded and closed subsets
          $A$ and $B$ of $X$ (hence, in the case 
          $X \subset \Z^n$, finite subsets of $X$) 
          based on $d$ is
          \[ H(A,B) = \min
          \left \{ \begin{array}{l}
          \varepsilon > 0 ~ | ~ \forall (a,b) \in A \times B,~  \exists~ (a',b') \in A \times B \\
                \mbox{ such that }  \varepsilon \ge d(a,b') \mbox{ and }  \varepsilon \ge d(a',b)
          \end{array} \right \}.
          \]
\end{itemize}

We make the following modification of the Hausdorff metric based on $d_{\kappa}$
as presented in~\cite{Vergili}.
\begin{definition}
Let $X \subset \Z^n$, $\emptyset \neq A \subset X$, $\emptyset \neq B \subset X$. Let
$\kappa$ be an adjacency on $X$. Then
\[ H_{(X,\kappa)}(A,B) = \min 
   \left \{ \begin{array}{l}
   \varepsilon \ge 0 ~ | ~ \forall (a,b) \in A \times B,~
                       \exists (a',b') \in A \times B  \\ \mbox{ such that }
 \mbox{there are $\kappa$-paths in $X$ of length} \le \varepsilon 
  \\ \mbox{ from $a$ to $b'$ and from $b$ to $a'$}
   \end{array} \right \}.
\]
\end{definition}
In the version of the Hausdorff metric based on $d_{\kappa}$
in~\cite{Vergili}, $X=\Z^n$. We show below that we can 
get very different results for the more general situation
$\emptyset \neq X \subset \Z^n$.

We use the notations $H_d$ for the Hausdorff metric 
based on the metric $d$, $H_p$ for the Hausdorff metric 
based on the $\ell_p$ metric $d_p$ (i.e., $H_p = H_{d_p}$),
and $H_{(X,\kappa)}$ for the Hausdorff metric based on
$d_{\kappa}$ for subsets of $X$ (i.e.,
$H_{\kappa} = H_{d_{\kappa}}$).

Another metric from classical topology that is easily adapted to digital topology is
Borsuk's {\em metric of continuity}~\cite{Borsuk54, Borsuk67} based on a metric $d$ 
which is typically, but not necessarily,
the Euclidean metric. For digital images $(X,\kappa)$ and $(Y,\kappa)$ in $\Z^n$,
define the metric of continuity $\delta_d(X,Y)$ as the greatest lower bound of numbers
$t>0$ such that there are $\kappa$-continuous $f: X \to Y$ and $g: Y \to X$ with
\[ d(x, f(x)) \le t \mbox{ for all } x \in X \mbox{ and } d(y, g(y)) \le t 
    \mbox{ for all } y \in Y.
\]

\begin{prop}
\label{HausAndContinuityMetrics}
Given finite digital images $(X,\kappa)$ and $(Y,\kappa)$ in $\Z^n$ and a metric $d$ for $\Z^n$,
$H_d(X,Y) \le \delta_d(X,Y)$.
\end{prop}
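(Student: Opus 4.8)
The plan is to unwind both definitions and observe that a pair of displacement-bounded functions directly witnesses the Hausdorff bound, with the common displacement bound playing the role of $\varepsilon$. A useful preliminary observation is that the $\kappa$-continuity hypothesis on $f$ and $g$ is more than we need: the argument will use only that $f$ sends $X$ into $Y$ and $g$ sends $Y$ into $X$ while moving each point a bounded $d$-distance. This is worth flagging explicitly, since it shows the inequality is really a statement about displacement-bounded set maps rather than about continuity.

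First I would fix an arbitrary $t > \delta_d(X,Y)$ and extract witnessing functions. Since $\delta_d(X,Y)$ is the greatest lower bound of the set of admissible displacement bounds, and since that set is upward closed (any $f,g$ adequate for a bound $t'$ are also adequate for every larger bound), for each such $t$ there exist $\kappa$-continuous $f: X \to Y$ and $g: Y \to X$ with $d(x,f(x)) \le t$ for all $x \in X$ and $d(y,g(y)) \le t$ for all $y \in Y$.

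Next I would verify that $t$ lies in the set over which the minimum defining $H_d(X,Y)$ is taken. Given an arbitrary pair $(x,y) \in X \times Y$, set $y' = f(x) \in Y$ and $x' = g(y) \in X$. Then $d(x,y') = d(x,f(x)) \le t$ and, using symmetry of $d$, $d(x',y) = d(g(y),y) \le t$, so $(x',y')$ is the pair required by the definition. Hence $t$ satisfies the condition in the definition of $H_d$, giving $H_d(X,Y) \le t$. Since this holds for every $t > \delta_d(X,Y)$, letting $t$ decrease to $\delta_d(X,Y)$ yields $H_d(X,Y) \le \delta_d(X,Y)$.

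There is no genuinely hard step here; the only delicate points are bookkeeping ones, and I would be careful about two of them. First, the definition of $H_d$ permits $x'$ and $y'$ to depend on the pair $(x,y)$, and in my construction $y'$ depends only on $x$ while $x'$ depends only on $y$, which is certainly allowed. Second, I would phrase the whole argument in terms of an arbitrary $t > \delta_d(X,Y)$ rather than asserting that the infimum defining $\delta_d$ is attained; this cleanly sidesteps any worry about the strict inequality $\varepsilon > 0$ in the Hausdorff definition and about whether an optimal continuous pair $(f,g)$ exists.
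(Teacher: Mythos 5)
Your proof is correct, but it runs in the opposite direction from the paper's. You bound $H_d$ from above: fixing $t > \delta_d(X,Y)$, you extract witnessing maps $f,g$ and verify the defining condition of $H_d$ at level $t$ by exhibiting, for each pair $(x,y) \in X \times Y$, the pair $(x',y') = (g(y), f(x))$. The paper instead bounds $\delta_d$ from below: it sets $u = H_d(X,Y)$, uses finiteness of $X$ and $Y$ to produce a worst-case point (without loss of generality some $x_0 \in X$) whose distance to $Y$ equals $u$, and observes that every $\kappa$-continuous $f: X \to Y$ must satisfy $d(x_0, f(x_0)) \ge u$ simply because $f(x_0) \in Y$; hence every admissible displacement bound is at least $u$. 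The two routes buy different things. Yours never invokes finiteness (no attainment of the Hausdorff distance is needed) and, as you correctly flag, never uses continuity of $f$ and $g$, so it holds verbatim for arbitrary nonempty bounded subsets and displacement-bounded set maps, and it avoids the ``without loss of generality'' bookkeeping entirely. The paper's argument is a pointwise obstruction: it shows the stronger fact that a single map $f: X \to Y$, with no companion $g$, already has maximum displacement at least the directed Hausdorff distance from $X$ to $Y$, localizing at one point $x_0$ the reason no clever choice of $f$ can beat the Hausdorff bound. Both arguments are short and complete; neither has a gap.
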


\begin{proof}
This is largely the argument of the analogous assertion
in~\cite{Borsuk54}. Let $u=H_d(X,Y)$. Since $X$ and $Y$ are finite,
without loss of generality, there
exists $x_0 \in X$ such that $u = \min\{y \in Y~|~d(x_0,y)\}$. Then for
all $\kappa$-continuous $f: X \to Y$, $d(x_0,f(x_0)) \ge u$.
Therefore, $\delta_d(X,Y) \ge u$.
\end{proof}

An example for which the inequality of Proposition~\ref{HausAndContinuityMetrics} is
strict is given in the following.

\begin{thm}
Let $X = \{(x,y) \in \Z^2 ~|~ |x| = n \mbox{ or } |y| = n\}$.
Let $Y = X \setminus \{(n,n)\}$. Then, using the Manhattan metric for $d$ and 
$\kappa = c_1$, we have
$H_1(X,Y) = 1$ and $\delta_d(X,Y) \ge 2n-1$.
\end{thm}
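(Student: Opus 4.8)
The plan is to treat the two (in)equalities separately. The equality $H_1(X,Y)=1$ is routine: since $Y\subset X$, the only point of $X$ lying at positive distance from $Y$ is the deleted corner $(n,n)$, whose two $c_1$-neighbours $(n-1,n)$ and $(n,n-1)$ both lie in $Y$ at Manhattan distance $1$. Feeding this into the definition of $H_1$ (for each $a\in X$ take $b'$ a nearest point of $Y$, and for each $b\in Y$ take $a'=b$) shows $\varepsilon=1$ works while $\varepsilon=0$ fails. For $\delta_d(X,Y)\ge 2n-1$ I would first reduce to the map $f$ alone: since $\delta_d(X,Y)$ is the infimum of the $t$ admitting \emph{both} a $c_1$-continuous $f\colon X\to Y$ and $g\colon Y\to X$ of displacement at most $t$, it suffices to prove that every $c_1$-continuous $f\colon X\to Y$ moves some point by at least $2n-1$; then any admissible $t$ satisfies $t\ge\max_x d_1(x,f(x))\ge 2n-1$, irrespective of $g$.

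The structural input is that $X$ (the $8n$ boundary lattice points of the square $\max(|x|,|y|)=n$) is a digital simple closed curve: its points list cyclically as $v_0,\dots,v_{8n-1}$ with $v_0=(n,n)$, consecutive points $c_1$-adjacent, the four corners at $v_0,v_{2n},v_{4n},v_{6n}$, and no further adjacencies. Deleting $v_0$ makes $Y$ a digital arc $v_1\text{--}v_2\text{--}\cdots\text{--}v_{8n-1}$, so the position map $\phi(v_j)=j$ is an isomorphism of $Y$ onto $[1,8n-1]_{\Z}$. Given continuous $f$, set $g=\phi\circ f$. Since $f(v_i)\adjeq f(v_{i+1})$ and positions of adjacent-or-equal points of $Y$ differ by at most $1$, the sequence $g$ is a $1$-Lipschitz closed walk on $[1,8n-1]$, i.e. $|g(i{+}1)-g(i)|\le 1$ cyclically.

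I would then argue by contradiction, assuming $d_1(v_i,f(v_i))\le 2n-2$ for all $i$. The key computation is that, for each corner, the set of $v_j$ within Manhattan distance $2n-2$ is a contiguous sub-arc of $X$; for $v_0$ this sub-arc straddles $v_0$, so after deleting $v_0$ it splits into the two \emph{ends} of the path $Y$. This yields $g(0)\in\{1,\dots,2n-2\}\cup\{6n+2,\dots,8n-1\}$, together with $g(2n)\in\{2,\dots,4n-2\}$ and $g(6n)\in\{4n+2,\dots,8n-2\}$. If $g(0)\le 2n-2$, then over the $2n$ steps from index $6n$ to $8n\equiv 0$ the walk must drop from $g(6n)\ge 4n+2$ to $g(0)$, a decrease of at least $2n+4$, impossible for a $1$-Lipschitz walk of length $2n$. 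If instead $g(0)\ge 6n+2$, the same contradiction appears over the $2n$ steps from $0$ to $2n$, where $g$ must fall from $g(0)$ to $g(2n)\le 4n-2$. Either way the assumption fails, so some point moves at least $2n-1$.

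The main obstacle is the geometric localization step: pinning $f(v_0)$ to the two ends of the arc and obtaining the precise corner-ball descriptions, because Manhattan distance and cyclic index-distance disagree near the corners. Getting the constants exactly right ($2n-2$ at the near corner, $4n+2$ and $4n-2$ at the flanking corners) is what makes the final count $2n+4>2n$ succeed; the rest is bookkeeping about the $1$-Lipschitz walk.
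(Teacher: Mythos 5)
Your proof is correct, and it reaches the paper's bound by a genuinely different route. The shared part is the reduction: like the paper, you observe that $(Y,c_1)$ is isomorphic to an integer interval (your position map $\phi$ is exactly the paper's isomorphism $F$) and that it suffices to show every $c_1$-continuous $f\colon X\to Y$ displaces some point by at least $2n-1$. The paper then finishes by citing the antipodal-point (digital Borsuk--Ulam type) theorem of \cite{Boxer10}: there exist antipodal $P,-P\in X$ with $f(P)\adjeq_{c_1} f(-P)$, and a case analysis on the location of $f(P)$ forces $d_1(P,f(P))\ge 2n-1$ or $d_1(-P,f(-P))\ge 2n-1$. You avoid any citation: composing $f$ with $\phi$ gives a $1$-Lipschitz closed walk $g$ on $[1,8n-1]_{\Z}$, and your corner computations under the assumption that all displacements are $\le 2n-2$ (namely $g(0)\in\{1,\dots,2n-2\}\cup\{6n+2,\dots,8n-1\}$, $g(2n)\le 4n-2$, $g(6n)\ge 4n+2$) contradict the Lipschitz bound over the $2n$ steps between consecutive corners, since $2n+4>2n$. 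I checked the corner-ball index ranges (points of $X$ on the two non-incident edges are at Manhattan distance $\ge 2n$ from each corner, so the balls are indeed the contiguous index ranges you state) and the step counts; the argument is sound, and it even degenerates correctly for $n=1$, where the assumption $d_1(v_i,f(v_i))\le 0$ is immediately impossible. The trade-off: the paper's proof is shorter and conceptually crisp but rests on an external theorem, while yours is longer but elementary and self-contained---in effect you re-derive, for this specific curve, the intermediate-value obstruction that underlies the cited antipodal theorem. (One cosmetic remark applying to both proofs: as displayed, $X$ is literally an unbounded set; you, like the paper, tacitly read it as the boundary of the square $[-n,n]^2$, which is clearly the intent.)
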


\begin{proof}
It is clear that $H_1(X,Y) = 1$.

Notice there is an isomorphism $F: (Y, c_1)$ to a subset
of $(\Z,c_1)$. Let
$f: X \to Y$ be $c_1$-continuous. By~\cite{Boxer10}, there is a pair of antipodal points
$P, -P \in X$ such that $|F \circ f(P) - F \circ f(-P)| \le 1$. Since $F$ is an
isomorphism, we must have $f(P) \adjeq_{c_1} f(-P)$. We will show that
either $d(P,f(P)) \ge 2n-1$ or $d(-P,f(-P)) \ge 2n-1$, as follows.

If $P=(n,u)$ then $-P=(-n,-u)$. Then:
\begin{itemize}
    \item If $f(P) = (n,-n)$ then $f(-P) \in \{(n-1,-n), (n,-n), (n,-n+1)\}$, so 
          $d(-P,f(-P)) \ge 2n-1$.
    \item Note $(n,n) \not \in Y$ so $f(P)$ cannot equal
          $(n,n)$.
    \item If $f(P) = (n,v)$ for $|v| < n$ then $f(-P) \in \{(n,v-1), (n,v), (n,v+1)\}$,
          so $d(-P,f(-P)) \ge 2n$.
\end{itemize}
The cases $P=(-n,u)$, $P=(w,n)$, and $P=(w,-n)$ are similar.
Thus $\delta_d(X,Y) \ge 2n-1$.
\end{proof}

We say the {\em diameter} of a nonempty bounded set $A \subset \R^n$ with respect to a metric $d$ is
\[ diam_d(A) = \max\{d(a,b) \, | \, a,b \in A\}.
\]
We will use the notations $diam_p$ for $diam_{d_p}$, and 
$diam_{\kappa}$ for $diam_{d_{\kappa}}$.

We define a function $s_d$ for pairs of nonempty bounded sets in $\R^n$ by
\[ s_d(A,B) = |diam_d(A) - diam_d(B)|.
\]
We use notations $s_p$ for $s_{d_p}$, and $s_{\kappa}$ for $s_{d_{\kappa}}$.

The following is easily verified.
\begin{lem}
The function $s_d$ is a pseudometric.
\end{lem}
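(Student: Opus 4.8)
The plan is to observe that $s_d$ has the form $s_d(A,B) = |g(A) - g(B)|$, where $g(A) = diam_d(A)$ is a single real number attached to each nonempty bounded set $A$. Any function of this shape---the absolute value of the difference of the values of a fixed real-valued function---is automatically a pseudometric, so the verification reduces to checking the four axioms of the definition of pseudometric, each of which follows from elementary properties of the absolute value on $\R$. The diameter itself enters only through the fact that it produces a well-defined real number; none of its finer geometric features are needed.

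First I would note that since $A$ is nonempty and bounded, the maximum $diam_d(A) = \max\{d(a,b)\mid a,b\in A\}$ exists and is a nonnegative real number, so $g$ takes values in $[0,\infty)$ and consequently $s_d$ takes values in $[0,\infty)$. Nonnegativity of $s_d$ is then immediate from $|{\cdot}|\ge 0$. Reflexivity follows from $s_d(A,A) = |g(A)-g(A)| = 0$, and symmetry follows from $s_d(A,B) = |g(A)-g(B)| = |g(B)-g(A)| = s_d(B,A)$.

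For the triangle inequality, given nonempty bounded sets $A$, $B$, $C$, I would write $g(A)-g(C) = \bigl(g(A)-g(B)\bigr) + \bigl(g(B)-g(C)\bigr)$ and apply the triangle inequality for real numbers, obtaining $|g(A)-g(C)| \le |g(A)-g(B)| + |g(B)-g(C)|$, which is exactly $s_d(A,C) \le s_d(A,B) + s_d(B,C)$.

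There is no genuine obstacle here, which is why the statement is labeled as easily verified; the only point worth flagging is that $s_d$ is merely a pseudometric and not a metric, since distinct sets can share a common diameter (for instance, two different segments of equal length), so $s_d(A,B) = 0$ does not force $A = B$.
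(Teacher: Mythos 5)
Your proof is correct; the paper itself offers no argument, stating only that the lemma ``is easily verified,'' and your verification via $s_d(A,B)=|g(A)-g(B)|$ with $g=diam_d$ is exactly the routine check the paper intends (and matches the pattern of the pseudometrics $s_\chi$ and $s_{LS,\kappa}$ mentioned earlier in the paper). Your closing remark that $s_d$ is not a metric is also accurate and consistent with the paper's terminology.
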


\section{Comparing (pseudo)metrics on digital images}
In this section, we compare the use of some of the (pseudo)metrics discussed above.

\begin{thm}
\label{closeDiam}
Let $A$ and $B$ be nonempty, bounded subsets of $\R^n$. Let $H_p$ be the Hausdorff metric
based on the $\ell_p$ metric $d_p$ and suppose $H_p(A,B) \le m$. Then
$s_p(A,B) \le 2m$.
\end{thm}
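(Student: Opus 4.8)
The plan is to reduce the statement to a single triangle-inequality estimate applied to a diameter-realizing pair. First I would unwind the hypothesis $H_p(A,B) \le m$ using the given definition of the Hausdorff metric. Since the defining property is monotone in $\varepsilon$ and attains its minimum at some value $\le m$, it holds at $\varepsilon = m$: for each $a \in A$, pairing it with an arbitrary $b \in B$, the condition supplies $b' \in B$ with $d_p(a,b') \le m$; symmetrically, each $b \in B$ admits $a' \in A$ with $d_p(a',b) \le m$. Thus the hypothesis gives the familiar two-sided statement that every point of $A$ lies within $\ell_p$-distance $m$ of some point of $B$, and vice versa.

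Next, because $s_p(A,B) = |diam_p(A) - diam_p(B)|$ is symmetric in $A$ and $B$ (as is $H_p$), I would assume without loss of generality that $diam_p(A) \ge diam_p(B)$, so that it suffices to bound $diam_p(A) - diam_p(B)$ by $2m$. I would then choose $a_1, a_2 \in A$ realizing the diameter, so that $d_p(a_1,a_2) = diam_p(A)$, and use the previous paragraph to pick $b_1, b_2 \in B$ with $d_p(a_1,b_1) \le m$ and $d_p(a_2,b_2) \le m$.

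The core computation is a single use of the triangle inequality for $d_p$:
\[ diam_p(A) = d_p(a_1,a_2) \le d_p(a_1,b_1) + d_p(b_1,b_2) + d_p(b_2,a_2) \le m + diam_p(B) + m, \]
where $d_p(b_1,b_2) \le diam_p(B)$ holds by definition of the diameter. Rearranging yields $diam_p(A) - diam_p(B) \le 2m$, which by the reduction above is exactly $s_p(A,B) \le 2m$.

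I expect no serious obstacle; the argument is elementary once the Hausdorff hypothesis is correctly decoded into the ``each point near the other set'' form. The only point needing mild care is the existence of the diameter-realizing pair $a_1,a_2$: for closed bounded (hence, in $\Z^n$, finite) sets the $\max$ in the definition of $diam_p$ is attained, matching the paper's convention, but to allow general bounded sets one would instead select $a_1,a_2$ with $d_p(a_1,a_2) > diam_p(A) - \eta$ for arbitrary $\eta > 0$ and let $\eta \to 0$ after the estimate. Either way, the bound $2m$ is precisely the cost of the two hops of length $\le m$ used to transfer the diameter endpoints from $A$ into $B$, which also indicates why the constant $2$ is sharp in general.
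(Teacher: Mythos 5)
Your proposal is correct and takes essentially the same route as the paper's proof: transfer the diameter-(nearly-)realizing endpoints of $A$ to points of $B$ at distance at most $m$ and apply the triangle inequality, then symmetrize. In fact, your $\eta$-approximation step is slightly more careful than the paper's argument, which simply asserts that a diameter-realizing pair exists even though the $\max$ need not be attained for arbitrary bounded subsets of $\R^n$.
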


\begin{proof}
There exist $a,a' \in A$ such that $d_p(a,a') = diam_p(A)$. There exist
$b, b' \in B$ such that $d_p(a,b) \le m$ and $d_p(a',b') \le m$. So
\[ diam_p(A) = d_p(a,a') \le d_p(a,b) + d_p(b,b') + d_p(b',a') \le 
                     m + diam_p(B) + m
                     \]
\[ = diam_p(B) + 2m.
\]
Similarly, $diam_p(B) \le diam_p(A) + 2m$. The assertion follows.
\end{proof}

\begin{figure}
    \centering
    \includegraphics[height=2in]{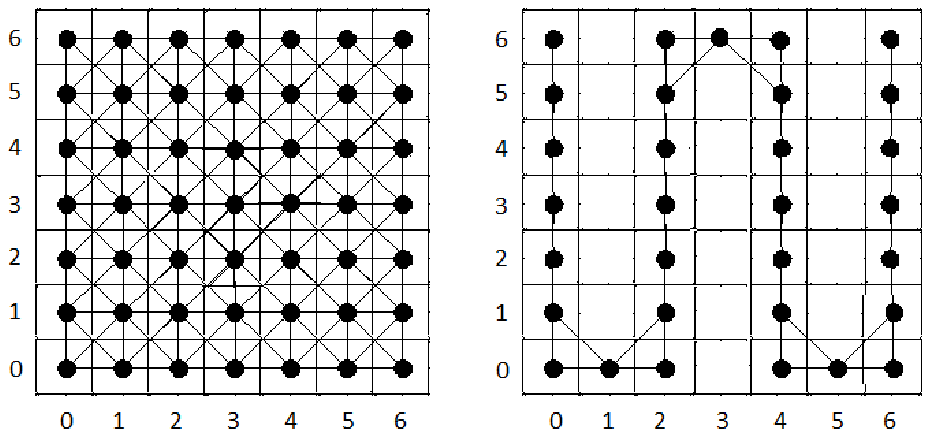}
    \caption{Left: $Q = [0,n]_{\Z}^2$ (here, $n=6$). \newline
             Right: \newline
             $S = Q \setminus 
             \left [ (\bigcup_{k \in \Z} \{4k+1\} \times [1,n]_{\Z} ) \cup 
             (\bigcup_{k \in \Z} \{4k+3\} \times [0,n-1]_{\Z} )
             \right ]$ \newline
             (here, $n=6$). \newline
            $Q$ and $S$ are within 1 with respect to the Hausdorff metric based on the Manhattan
            metric; however, they differ considerably with respect to diameter in the 
            shortest path metric.
          }
    \label{fig:squareSnake}
\end{figure}

By contrast, we have the following.

\begin{exl}
Let $n \in \N$ such that $n$ is even. 
Let $Q = [0,n]_{\Z}^2$. Let 
 \[ S = Q \setminus 
             \left [ (\bigcup_{k \in \Z} \{4k+1\} \times [1,n]_{\Z} ) \cup 
             (\bigcup_{k \in \Z} \{4k+3\} \times [0,n-1]_{\Z} )
             \right ].
             \]
(See Figure~\ref{fig:squareSnake}.) Then $s_1(Q,S)=0$, but
while $diam_{c_1}(Q) = 2n$, we have $diam_{c_1}(S)= n + n(1+n/2)$.
Thus $s_{c_1}(Q,S) = n^2 / 2$.
\end{exl}

\begin{proof}
It is easy to see that both $Q$ and 
$S$ have diagonally opposed points that are maximally 
distant in the $d_1$ metric. Therefore, 
$diam_1(S) = diam_1(Q) = 2n$, so $s_1(Q,S) = 0$.

Diagonally opposed points of $Q$ are maximally separated with respect to $d_{c_1}$,
so $diam_{c_1}(Q) = 2n$. Maximally separated points of $S$ with respect to $d_{c_1}$ are
\[ \begin{array}{ll}
    (0,n) \mbox{ and } (n,n) & \mbox{if } n=4k+2 \mbox{ for some } k \in \Z; \\
    (0,n) \mbox{ and } (n,0) & \mbox{if } n=4k \mbox{ for some } k \in \Z.
\end{array}
\]
In either case, the unique shortest $c_1$-path between maximally separated points
requires $n$ horizontal steps. The number of vertical steps is computed as follows. There are
$1+n/2$ vertical line segments that must be traversed, each of length $n$, so the number of
vertical steps is $n(1+n/2)$. Thus the number of steps between maximally separated members of $S$ is
$diam_{c_1}(S)= n + n(1+n/2)$.

Hence for $\kappa = c_1$ we have 
$s_{\kappa}(Q,S) = |n + n(1+n/2) - 2n| =  n^2/2$.
\end{proof}

We do not get an analog of Theorem~\ref{closeDiam} by using 
the Hausdorff metric based on an adjacency $\kappa$
instead of $H_p$. This is shown in the following example.

\begin{exl}
\label{rectAndC}
Let $A = [0,n]_{\Z} \times [0,2]_{\Z}$. Let $B = A \setminus ([1,n]_{\Z} \times \{1\})$. (See
Figure~\ref{fig:6x2-and-C}.) Then
$H_1(A,B) = 1$. However, we have the following.
\begin{itemize}
    \item For $\kappa = c_1$, $diam_{\kappa}(A) = n+2$ and $diam_{\kappa}(B) = 2n + 2$,
          so $s_{\kappa}(A,B) = n$.
    \item For $\kappa = c_2$, $diam_{\kappa}(A) = n$ and $diam_{\kappa}(B) = 2n$, so     
          $s_{\kappa}(A,B) = n$.
\end{itemize}
\end{exl}

\begin{figure}
    \centering
    \includegraphics[height=1.5in]{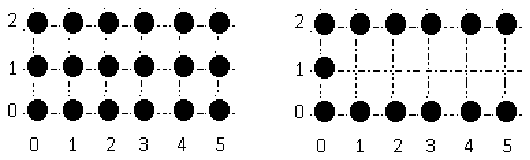}
    \caption{Digital images $A$ (left) and $B$ (right) for Example~\ref{rectAndC},
             using $n=5$. Using the shortest path metric and either
             $\kappa = c_1$ or $\kappa = c_2$,
             maximally distant points in $A$ are $(0,0)$ and $(n,2)$,
             and  maximally distant points in $B$ are $(n,0)$ and $(n,2)$.
           }
    \label{fig:6x2-and-C}
\end{figure}

\begin{thm}
\label{c2AndEuclidean}
Let $A$ and $B$ be  finite, nonempty $c_u$-connected 
subsets of a $c_u$-connected subset $X$ of $\Z^n$, 
where $1 \le u \le n$.
Suppose we have
$H_{(X,c_u)}(A,B) \le m$ for some $m \in \N$. 
Then $H_p(A,B) \le m u^{1/p}$.
\end{thm}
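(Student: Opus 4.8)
The plan is to pass from the shortest-path (graph) distance to the $\ell_p$ distance one edge at a time, and then sum using the triangle inequality.

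First I would record the elementary per-step bound. If $x \adj_{c_u} x'$ in $\Z^n$, then by the definition of $c_u$-adjacency the points $x$ and $x'$ differ in at most $u$ coordinates, each such difference is exactly $1$, and the remaining coordinates agree. Hence at most $u$ of the terms $|x_i - x_i'|^p$ equal $1$ and the rest vanish, so
\[
d_p(x,x') = \left( \sum_{i=1}^n |x_i - x_i'|^p \right)^{1/p} \le u^{1/p}.
\]

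Next I would propagate this along a path. Suppose there is a $c_u$-path $x = y_0, y_1, \ldots, y_\ell = y$ in $X$ (hence in $\Z^n$) of length $\ell$. Since $d_p$ is a genuine metric on $\R^n$ for $p \ge 1$, the triangle inequality gives $d_p(x,y) \le \sum_{j=0}^{\ell-1} d_p(y_j, y_{j+1}) \le \ell\, u^{1/p}$. In particular, whenever $d_{c_u}(x,y) \le m$ the shortest such $c_u$-path has length $\le m$, and therefore $d_p(x,y) \le m\, u^{1/p}$.

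Finally I would unpack the hypothesis $H_{(X,c_u)}(A,B) \le m$. By the definition of the Hausdorff metric based on $d_{c_u}$, for every $a \in A$ there is some $b' \in B$ joined to $a$ by a $c_u$-path in $X$ of length $\le m$, and symmetrically, for every $b \in B$ there is some $a' \in A$ joined to $b$ by such a path. Applying the path bound to each matching yields $d_p(a,b') \le m\, u^{1/p}$ and $d_p(a',b) \le m\, u^{1/p}$, which is precisely the statement $H_p(A,B) \le m\, u^{1/p}$ by the definition of $H_p$. The argument is essentially routine and I do not expect a serious obstacle; the only points requiring care are the per-edge computation (the worst case being $u$ coordinates each contributing $1^p$, so $u^{1/p}$ rather than $u$) and reading off both directions ($A \to B$ and $B \to A$) from the Hausdorff-metric definition. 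The connectivity hypotheses on $A$, $B$, and $X$ serve only to ensure that $d_{c_u}$, and hence $H_{(X,c_u)}$, is well defined; they play no further role in this direction of comparison.
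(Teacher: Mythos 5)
Your proposal is correct and follows essentially the same route as the paper's own proof: bound each $c_u$-adjacency step by $u^{1/p}$ in the $\ell_p$ metric, chain the bound along the $c_u$-paths of length at most $m$ guaranteed by the hypothesis $H_{(X,c_u)}(A,B) \le m$, and read off $H_p(A,B) \le m\,u^{1/p}$ from the definition of the Hausdorff metric. Your write-up is in fact slightly more careful than the paper's, making explicit both the per-edge computation and the use of the triangle inequality for $d_p$.
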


\begin{proof}
By hypothesis, given $x \in A$ and $y \in B$, there exist $x' \in A$, $y' \in B$,
and $c_u$-paths $P$ from $x$ to $y'$ and $Q$ from $y$ to $x'$ in $X$ such that
each of $P$ and $Q$ has length of at most $m$. Since each $c_u$-adjacency corresponds to
a Euclidean distance of at most $u^{1/p}$, it follows that
$d_p(x,y') \le mu^{1/p}$ and $d_p(y,x') \le mu^{1/p}$. It follows that
$H_p(X,Y) \le mu^{1/p}$.
\end{proof}

We do not get a converse for Theorem~\ref{c2AndEuclidean}, as the following shows.
\begin{exl}
\label{H-for-l_p_and-shortestPath}
Let $B = [0,n]_{\Z} \times [0,2]_{\Z} \setminus ([1,n]_{\Z} \times \{1\})$ as in
Example~\ref{rectAndC}. (See Figure~\ref{fig:6x2-and-C}.) Let
$C = [0,n]_{\Z} \times \{0\} \subset B$. Then
$H_1(B,C) = H_2(B,C) = 2$. However,
$H_{(B,c_1)}(B,C) = n+2$ and $H_{(B,c_2)}(B,C) = n+1$.
\end{exl}

\begin{proof}
It is easy to see that $H_1(B,C) = H_2(B,C) = 2$.

Since $C \subset B$, finding a Hausdorff distance between $B$ and $C$ comes down to
considering a furthest point of $B$ from $C$.
With respect to $\kappa=c_1$ and also with respect to $\kappa=c_2$, the furthest
point of $B$ from $C$ in the shortest path metric is $b=(n,2)$ and its closest point of $C$ 
is $c=(0,0)$. Since $d_{c_1}(b,c)=n+2$ and $d_{c_2}(b,c)=n+1$, the assertion follows.
\end{proof}

Roughly, it appears that the great differences found in Examples~\ref{rectAndC}
and~\ref{H-for-l_p_and-shortestPath}, between measures based in $\ell_p$ metrics and
measures based on the shortest path metric, are due to significant deviations from
convexity.
If we consider $H_{(X,c_i)}(A,B)$ for a set $X$ such as a digital cube, we
may find $H_p$ and $H_{(X,c_p)}$ are more alike, as we see below.

\begin{prop}
\label{H_1AndH_shortestPath}
Let $A \neq \emptyset \neq B$, $A \cup B \subset J=[0,n]_{\Z}^2$. Then
$H_1(A,B) = H_{(J, c_1)}(A,B)$.
\end{prop}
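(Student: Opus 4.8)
The plan is to reduce the claimed equality of two Hausdorff metrics to an equality of the underlying point-to-point metrics on the cube $J$. Specifically, I would first establish that the shortest path metric $d_{c_1}$ and the Manhattan metric $d_1$ agree on $J$: for all $x,y \in J$,
\[ d_{c_1}(x,y) = d_1(x,y). \]
Once this is in hand, the proposition follows at once, because each version of the Hausdorff metric is determined entirely by the values of its underlying metric on pairs of points drawn from $A \cup B \subset J$. Indeed, the condition ``there is a $c_1$-path in $J$ of length $\le \varepsilon$ from $a$ to $b'$'' appearing in the definition of $H_{(J,c_1)}$ is precisely the statement $d_{c_1}(a,b') \le \varepsilon$, so $H_{(J,c_1)}$ is the Hausdorff metric based on $d_{c_1}$, just as $H_1$ is the Hausdorff metric based on $d_1$; replacing one metric by an equal one leaves the set of admissible $\varepsilon$, and hence the minimal admissible $\varepsilon$, unchanged.

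To prove $d_{c_1} = d_1$ on $J$ I would argue two inequalities. For $d_{c_1}(x,y) \ge d_1(x,y)$, observe that a single $c_1$-step alters exactly one coordinate by $\pm 1$ and hence decreases $d_1(\cdot,y)$ by at most $1$; therefore any $c_1$-path from $x$ to $y$ must have length at least $d_1(x,y)$. For the reverse inequality I would exhibit an explicit path realizing $d_1(x,y)$ steps: first change the first coordinate from $x_1$ to $y_1$ by unit increments, then change the second coordinate from $x_2$ to $y_2$ by unit increments. Each intermediate point has first coordinate between $x_1$ and $y_1$ and second coordinate between $x_2$ and $y_2$, hence lies in $[0,n]_{\Z}^2 = J$; this is a $c_1$-path in $J$ of length exactly $|x_1-y_1| + |x_2-y_2| = d_1(x,y)$, so $d_{c_1}(x,y) \le d_1(x,y)$.

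The essential point --- and the place where the hypothesis that $J$ is the full digital cube is used --- is the upper bound: the monotone ``staircase'' path stays inside $J$ only because $J$ is a product of intervals, i.e.\ digitally convex. This is exactly what fails in Examples~\ref{rectAndC} and~\ref{H-for-l_p_and-shortestPath}, where deletions from the box force shortest paths to detour and $d_{c_1}$ strictly exceeds $d_1$. I therefore expect no genuine obstacle beyond carefully confirming that the staircase path remains in $J$ and has the stated length; the remainder is the formal, and routine, observation that equal underlying metrics yield equal Hausdorff metrics.
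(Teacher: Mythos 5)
Your proposal is correct and follows essentially the same route as the paper: both directions rest on exactly the two observations the paper uses, namely that a monotone staircase path realizing $d_1(x,y)$ stays inside the full cube $J$ (giving $H_{(J,c_1)} \le H_1$), and that each $c_1$-step covers $d_1$-distance exactly $1$ (giving $H_1 \le H_{(J,c_1)}$). The only difference is organizational --- you factor the argument through the pointwise identity $d_{c_1} = d_1$ on $J$ and then observe that equal underlying metrics yield equal Hausdorff metrics, whereas the paper argues directly at the level of the Hausdorff distances; your version has the minor merit of spelling out the verification, which the paper compresses into ``by definition of $d_1$,'' that the staircase path indeed remains in $J$.
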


\begin{proof}
Let $n = H_1(A,B)$. Let $x \in A$. Then there exists $y \in B$ such that
$d_1(x,y) \le n$. By definition of $d_1$, it follows that there is a $c_1$-path
in $J$ of length at most $n$ from $x$ to $y$. Similarly, given $u$ in $B$, there is a $c_1$-path
in $J$ of length at most $n$ from $u$ to a point $v \in A$. Therefore,
$H_{(J, c_1)}(A,B) \le n = H_1(A,B)$.

Now let $n = H_{(J, c_1)}(A,B)$. Then given $x \in A$, there is a $c_1$-path in $J$ of
length at most $n$ from $x$ to some $y \in B$. Similarly, 
given $u \in B$, there is a $c_1$-path in $J$ of
length at most $n$ from $u$ to some $v \in A$. Since every $c_1$ adjacency represents
a $d_1$ distance of 1, it follows that $d_1(x,y) \le n$ and $d_1(u,v) \le n$.
Thus $H_1(A,B) \le n = H_{(J, c_1)}(A,B)$. The assertion follows.
\end{proof}

Using the observation that a $c_u$-adjacency in $\Z^r$, $1 \le u \le r$,
represents a $d_p$ distance between the adjacent points 
that is between $1$ and $u^{1/p}$,
we can generalize the argument used to prove Proposition~\ref{H_1AndH_shortestPath},
getting the following.

\begin{thm}
\label{H_kAndH_shortestPath}
Let $A \neq \emptyset \neq B$, $A \cup B \subset J=[0,n]_{\Z}^v$. Then for $1 \le u \le v$,
$H_{(J,c_u)}(A,B) \le u^{1/p} \cdot H_{(J, c_u)}(A,B)$.
\end{thm}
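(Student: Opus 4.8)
The displayed inequality is evidently meant to read $H_p(A,B) \le u^{1/p}\cdot H_{(J,c_u)}(A,B)$, i.e.\ with the $\ell_p$-based Hausdorff metric $H_p$ on the left; this is the generalization of Proposition~\ref{H_1AndH_shortestPath} promised in the preceding paragraph, and it is the form I would prove. The entire argument rests on the single estimate already isolated just before the statement: if $w \adj_{c_u} w'$ in $\Z^v$, then $w$ and $w'$ differ in at most $u$ coordinates, each by exactly one unit, so $d_p(w,w') \le u^{1/p}$ (and $\ge 1$). Thus the plan is simply to promote this per-step bound to a bound along $c_u$-paths and then feed it into the definition of $H_p$.

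First I would set $m = H_{(J,c_u)}(A,B)$ and unwind the definition of $H_{(J,c_u)}$: for every $(a,b)\in A\times B$ there exist $(a',b')\in A\times B$ and $c_u$-paths in $J$, one from $a$ to $b'$ and one from $b$ to $a'$, each of length at most $m$. Next I would bound the $d_p$-distance along such a path. Writing a $c_u$-path from $x$ to $z$ of length $\ell \le m$ as $x = w_0 \adj_{c_u} w_1 \adj_{c_u} \cdots \adj_{c_u} w_\ell = z$, the triangle inequality for $d_p$ together with the per-step bound gives $d_p(x,z) \le \sum_{j=0}^{\ell-1} d_p(w_j,w_{j+1}) \le \ell\,u^{1/p} \le m\,u^{1/p}$. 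Applying this to the two paths above yields $d_p(a,b')\le m\,u^{1/p}$ and $d_p(b,a')\le m\,u^{1/p}$ for every $(a,b)$, so $\varepsilon = m\,u^{1/p}$ satisfies the defining condition of $H_p(A,B)$, and $H_p(A,B)\le m\,u^{1/p}=u^{1/p}H_{(J,c_u)}(A,B)$ follows.

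Computationally there is no obstacle: this is exactly the argument of Theorem~\ref{c2AndEuclidean} specialized to $X=J$, and confinement of the paths to $J$ is built into the definition of $H_{(J,c_u)}$ while the per-step bound $u^{1/p}$ does not depend on $J$. The only genuinely delicate point is conceptual, namely resisting the temptation to assert the reverse inequality in the spirit of the equality of Proposition~\ref{H_1AndH_shortestPath}. A matching lower bound $H_{(J,c_u)}(A,B)\le H_p(A,B)$ does \emph{not} hold in general once $u<v$ and $p>1$: for $u=1$ a shortest $c_1$-path from $x$ to $z$ has length $\sum_i|x_i-z_i|=d_1(x,z)\ge d_p(x,z)$, so $d_{c_1}$ dominates $d_p$ and the inequality goes the wrong way. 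Hence only the displayed upper bound survives for general $u$, and I would state it as such.
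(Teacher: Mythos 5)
You read the garbled statement correctly: the left-hand side is a typo for $H_p(A,B)$ (as literally written the inequality is trivial, since $u^{1/p}\ge 1$), and your proof of $H_p(A,B)\le u^{1/p}\cdot H_{(J,c_u)}(A,B)$ is exactly the paper's intended argument --- the paper supplies no separate proof, only the preceding remark that the per-step bound (points that are $c_u$-adjacent lie at $d_p$-distance at most $u^{1/p}$) generalizes the argument of Proposition~\ref{H_1AndH_shortestPath}, which is precisely the path-length estimate you carry out, and which also specializes Theorem~\ref{c2AndEuclidean} to $X=J$ as you note. Your closing observation that the reverse inequality fails for $u<v$ and $p>1$ correctly explains why only this one direction of Proposition~\ref{H_1AndH_shortestPath} survives the generalization.
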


\section{Further remarks}
The Hausdorff metric is often used to compare objects $A$ and $B$. It is
easy to compute efficiently~\cite{Shon89, BxMi} 
and gives a good indication of how well each of its arguments approximates the other
with respect to location.

However, two objects may be close in the Hausdorff metric and yet have very different
geometric or topological properties. Lemma~\ref{addPseudos} tells us that
by adding other pseudometrics or metrics, such as those we have discussed, to the
Hausdorff metric, we can get another metric in which closeness is more likely to
validate the parameters as digital images representing the same physical object.

The suggestions and corrections of an anonymous 
reviewer are gratefully acknowledged.

\end{document}